\newcommand{\R}{{\mathcal R}}
\newcommand\intervene{\mathrm{do}}
\newcommand\DaDb[2]{\frac{d#1}{d#2}}
\newcommand\Gpa[2]{\mathrm{pa}_{#1}(#2)}
\tikzstyle{var}=[circle,draw=black,fill=white,thick,minimum size=20pt,inner sep=0pt]
\tikzstyle{arr}=[->,>=stealth',draw=black,line width=1pt]
\title{From Ordinary Differential Equations to Structural Causal Models: the deterministic case}
\author{{\bf Joris M.~Mooij} \\ Institute for Computing and \\ Information Sciences \\ Radboud University Nijmegen \\ The Netherlands
\And
{\bf Dominik Janzing} \\ Max Planck Institute \\ for Intelligent Systems \\ T\"ubingen, Germany
\And
{\bf Bernhard Sch\"olkopf} \\ Max Planck Institute \\ for Intelligent Systems \\ T\"ubingen, Germany
}
\begin{document}

\maketitle

\begin{abstract}
We show how, and under which conditions, the equilibrium states of a first-order Ordinary
Differential Equation (ODE) system can be described with a deterministic Structural Causal Model (SCM).
Our exposition sheds more light on the concept of causality as expressed within the framework of Structural Causal
Models, especially for cyclic models.
\end{abstract}

\section{Introduction}

Over the last few decades, a comprehensive theory for acyclic
causal models was developed (e.g., see \citep{Pearl00,SpiGlySch93}). In particular,
different, but related, approaches to causal inference and modeling have been proposed for the 
causally sufficient case. These approaches are based on different starting
points. One approach starts from the (local or global) causal Markov condition and links
observed independences to the causal graph. Another approach uses causal Bayesian networks
to link a particular factorization of the joint distribution of the variables to
causal semantics. The third approach uses a structural causal model (sometimes also
called structural equation model or functional causal model) where each effect
is expressed as a function of its direct causes and an unobserved noise variable. The relationships between these
aproaches are well understood \citep{Lauritzen96,Pearl00}.

Over the years, several attempts have been made to extend the theory to the
cyclic case, thereby enabling causal modeling of systems that involve feedback
\citep{Spirtes95,Koster96,PearlDechter96,Neal00,Hyttinen12}. 
However, the relationships between the different approaches mentioned
before do not immediately generalize to the cyclic case in general (although partial results
are known for the linear case and the discrete case). Nevertheless, several algorithms (starting 
from different assumptions) have been proposed for inferring cyclic causal models from
observational data
\citep{Richardson96,LacerdaSpirtesRamseyHoyer08,SchmidtMurphy09,Itani_etal10,Mooij_et_al_NIPS_11}.

The most straightforward extension to the cyclic case seems to be offered by
the structural causal model framework. Indeed, the formalism stays intact when one
simply drops the acyclicity constraint. However, the question then arises how to
interpret cyclic structural equations. One option is
to assume an underlying discrete-time dynamical system, in which the structural
equations are used as fixed point equations 
\citep{Spirtes95,Dash05,LacerdaSpirtesRamseyHoyer08,Mooij_et_al_NIPS_11,Hyttinen12},  
i.e., they are used as update rules to calculate the values at time
$t+1$ from the values at time $t$, and then one lets $t \to \infty$. 
Here we show how an alternative 
interpretation of structural causal models arises naturally when considering
systems of ordinary differential equations.
By considering how these differential equations behave in an equilibrium state, we arrive at a
structural causal model that is time independent, yet where the causal
semantics pertaining to interventions is still valid.
As opposed to the usual interpretation as discrete-time fixed point equations, the continuous-time dynamics is not defined by 
the structural equations. Instead, we describe how the structural equations arise from the given dynamics. 
Thus it becomes evident that different dynamics can yield identical structural causal models.
This interpretation sheds more light
on the meaning of structural equations, and does not make any substantial
distinction between the cyclic and acyclic cases.

It is sometimes argued that inferring causality amounts to simply inferring the
time structure connecting the observed variables, since the cause always
preceeds the effect. This, however, ignores two important facts: First, time
order between two variables does not tell us whether the earlier one caused the
later one, or whether both are due to a common cause. This paper addresses a second
counter argument: a variable need not necessarily refer to a measurement
performed at a certain time instance. Instead, a causal graph may formalize
how intervening on some variables influences the equilibrium state of others.
This describes a phenomenological level on which the original time structure
between variables gets lost, but causal graphs und structural equations may
still be well-defined. On this level, also cyclic structural equations get a
natural and well-defined meaning.

For simplicity, we consider only deterministic systems, and leave the extension
to stochastic systems with possible confounding as future work.

\section{Ordinary Differential Equations}
Let $\C{I} := \{1,\dots,D\}$ be an index set of variable labels. Consider variables $X_i \in \C{R}_i$ for $i \in \C{I}$, where
$\C{R}_i \subseteq \RN^{d_i}$. We will use normal font for a single variable and boldface for a tuple of variables
$\B{X}_I \in \prod_{i \in I} \C{R}_i$. 
\subsection{Observational system}
Consider a dynamical system $\C{D}$ described by $D$ coupled first-order ordinary 
differential equations and an initial condition $\B{X}_0 \in \R_{\C{I}}$:\footnote{We write $\dot{X} := \DaDb{X}{t}$.}
\begin{equation}\label{eq:ODE_system}\begin{split}
  \dot{X}_i(t) = f_i(\B{X}_{\Gpa{\C{D}}{i}}), \quad
  X_i(0) = (\B{X}_0)_i \quad \forall i \in \C{I}
\end{split}\end{equation}
Here, $\Gpa{\C{D}}{i} \subseteq \C{I}$ is the set of (indices of) \emph{parents}\footnote{Note that $X_i$ can be a parent of itself.} of 
variable $X_i$, and each $f_i : \R_{\Gpa{\C{D}}{i}} \to \R_i$ is a (sufficiently smooth) function. This dynamical system is assumed to 
describe the ``natural'' or ``observational'' 
state of the system, without any intervention from outside. We will assume that if $j \in \Gpa{\C{D}}{i}$,
then $f_i$ depends on $X_j$ (in other words, $f_i$ should not be constant when varying $X_j$).
Slightly abusing terminology, we will henceforth call such a dynamical system $\C{D}$ an Ordinary Differential Equation (ODE).

The \emph{structure} of these differential equations can be represented as a
directed graph $\C{G}_{\C{D}}$, with one node for each variable and a directed
edge from $X_i$ to $X_j$ if and only if $\dot X_j$ depends on $X_i$.

\subsubsection{Example: the Lotka-Volterra model}\label{sec:example_LotkaVolterra}

\begin{figure}[th]
  \centering
  \subfigure[$\C{G}_{\C{D}}$]{\begin{tikzpicture}
    \node[var] (X1) at (-0.5,1) {$X_1$};
    \node[var] (X2) at (1,0) {$X_2$};
    \draw[arr, bend left] (X1) edge (X2);
    \draw[arr, bend left] (X2) edge (X1);
    \draw[arr] (X1) edge[loop left] (X1);
    \draw[arr] (X2) edge[loop right] (X2);
  \end{tikzpicture}\label{fig:LotkaVolterra_obs_graph}}\qquad
  \subfigure[$\C{G}_{\C{D}_{\intervene(X_2=\xi_2)}}$]{\begin{tikzpicture}
    \node[var] (X1) at (-0.5,1) {$X_1$};
    \node[var] (X2) at (1,0) {$X_2$};
    \draw[arr, bend left] (X2) edge (X1);
    \draw[arr] (X1) edge[loop left] (X1);
  \end{tikzpicture}\label{fig:LotkaVolterra_int_graph}}
  \caption{\label{fig:LotkaVolterra_graph}(a) Graph of the Lotka-Volterra model \eref{eq:LotkaVolterraODE}; (b) Graph of
  the same ODE after the intervention $\intervene(X_2 = \xi_2)$, corresponding with \eref{eq:LotkaVolterraODE_intervened}.}
\end{figure}

The Lotka-Volterra model \citep{Murray02} is a well-known model from population biology, modeling the mutual
influence of the abundance of prey $X_1 \in [0,\infty)$ (e.g., rabbits) and the abundance of predators $X_2 \in [0,\infty)$ (e.g., wolves):
\begin{equation}\label{eq:LotkaVolterraODE}
  \begin{cases}
    \dot X_1 & = X_1 (\theta_{11} - \theta_{12} X_2) \\
    \dot X_2 & = -X_2 (\theta_{22} - \theta_{21} X_1)
  \end{cases}
  \qquad
  \begin{cases}
    X_1(0) = a \\
    X_2(0) = b
  \end{cases}
\end{equation}
with all parameters $\theta_{ij} > 0$ and initial condition satisfying $a \ge 0, b \ge 0$.
The graph of this system is depicted in Figure~\ref{fig:LotkaVolterra_obs_graph}.

\subsection{Perfect interventions}\label{sec:ODE_interventions}
\emph{Interventions} on the system $\C{D}$ described in \eref{eq:ODE_system} can be modeled in different ways. Here we will
focus on \emph{``perfect'' interventions}: for a subset $I \subseteq \C{I}$ of components,
we force the value of $\B{X}_I$ to attain some value $\B{\xi}_I \in \R_I$. In particular, we will assume that the 
intervention is active from $t=0$ to $t=\infty$, and that its value $\B{\xi}_I$ does not change over time. Inspired by the do-operator introduced by \citet{Pearl00}, we will denote this type of intervention as
$\intervene(\B{X}_I = \B{\xi}_{I})$. 

On the level of the ODE, there are many ways of realizing a given perfect intervention.
One possible way
is to add terms of the form $\kappa (\xi_i - X_i)$ (with $\kappa > 0$) to the expression for $\dot X_i$,
for all $i \in I$: 
\begin{equation}\label{eq:ODE_system_explicitly_intervened}\begin{split}
  \dot X_i(t) & = \begin{cases}
    f_i(\B{X}_{\Gpa{\C{D}}{i}}) + \kappa (\xi_i - X_i)        & i \in I \\
    f_i(\B{X}_{\Gpa{\C{D}}{i}}) & i \in \C{I} \setminus I,
  \end{cases}
  \\
  X_i(0) & = (\B{X}_0)_i 
\end{split}\end{equation}
This would correspond to extending the
system by components which monitor the values of $\{X_i\}_{i\in I}$ and exert 
negative feedback if they deviate from their target values $\{\xi_i\}_{i \in I}$.
Subsequently, we let $\kappa \to \infty$ to consider the idealized
situation in which the intervention completely overrides the other mechanisms that
normally determine the value of $\B{X}_I$.  
Assuming that the functions $\{f_i\}_{i \in I}$ are bounded, we can let $\kappa \to \infty$
and obtain the \emph{intervened system} $\C{D}_{\intervene(\B{X}_I = \B{\xi}_I)}$:
\begin{equation}\label{eq:ODE_system_intervened}\begin{split}
  \dot X_i(t) & = \begin{cases}
    0                          & \quad i \in I \\
    f_i(\B{X}_{\Gpa{\C{D}}{i}}) & \quad i \in \C{I} \setminus I,
  \end{cases}
  \\
  X_i(0) & = \begin{cases}
     \xi_i         & \quad i \in I \\
     (\B{X}_{0})_i & \quad i \in \C{I} \setminus I
  \end{cases}
\end{split}\end{equation}

A perfect intervention changes the graph $\C{G}_{\C{D}}$ associated to the ODE $\C{D}$ by removing the 
incoming arrows on the nodes corresponding to the intervened variables $\{X_i\}_{i\in I}$. 
It also changes the parent sets of intervened variables: for each $i \in I$, $\Gpa{\C{D}}{i}$ is replaced by $\Gpa{\C{D}_{\intervene(\B{X}_I = \B{\xi}_I})}{i} = \emptyset$.


\subsubsection{Example: Lotka-Volterra model}

Let us return to the example in section~\ref{sec:example_LotkaVolterra}. In this context,
consider the perfect intervention $\intervene(X_2 = \xi_2)$. This intervention
could be realized by monitoring the abundance of wolves very precisely and
making sure that the number equals the target value $\xi_2$ at all time (for example, by
killing an excess of wolves and introducing new wolves from some reservoir of wolves).
This leads to the following intervened ODE:
\begin{equation}\label{eq:LotkaVolterraODE_intervened}
  \begin{cases}
    \dot X_1 & = X_1 (\theta_{11} - \theta_{12} X_2) \\
    \dot X_2 & = 0
  \end{cases}
  \qquad
  \begin{cases}
    X_1(0) = a \\
    X_2(0) = \xi_2
  \end{cases}
\end{equation}
The corresponding intervened graph is illustrated in Figure~\ref{fig:LotkaVolterra_int_graph}.

\subsection{Stability}

An important concept in our context is \emph{stability}, defined as follows:
\begin{definition}\label{def:stability}
  The ODE $\C{D}$ specified in \eref{eq:ODE_system} is called \emph{stable} if there exists a unique equilibrium state 
  $\B{X}^* \in \R_{\C{I}}$ such that for any initial state $\B{X}_0 \in \R_{\C{I}}$, the system
  converges to this equilibrium state as $t \to \infty$:
  $$\exists!_{\B{X}^* \in \R_{\C{I}}}\ \forall_{\B{X}_0 \in \R_{\C{I}}} : \lim_{t\to\infty} \B{X}(t) = \B{X}^*.$$
\end{definition}
One can weaken the stability condition by demanding convergence to and uniqueness of the equilibrium only for a certain subset of all initial
states. For clarity of exposition, we will use this strong stability condition.

We can extend this concept of stability by considering a certain set of perfect interventions:
\begin{definition}\label{def:interventional_stability}
  Let $\C{J} \subseteq \C{P}(\C{I})$.\footnote{For a set $A$, we denote with $\C{P}(A)$ the power set of $A$ (the set of all subsets of $A$).}
  The ODE $\C{D}$ specified in \eref{eq:ODE_system} is called \emph{stable with respect to
  $\C{J}$} if for all $I \in \C{J}$ and for all $\B{\xi}_I \in \R_I$, the intervened ODE $\C{D}_{\intervene(\B{X}_I = \B{\xi}_I)}$
  has a unique equilibrium state
  $\B{X}^*_{\intervene(\B{X}_I = \B{\xi}_I)} \in \R_{\C{I}}$ such that for any initial state $\B{X}_0 \in \R_{\C{I}}$
  with $(\B{X}_0)_I = \B{\xi}_I$, the system converges to this equilibrium as $t \to \infty$:
  $$\exists!_{\B{X}^*_{\intervene(\B{X}_I = \B{\xi}_I)} \in \R_{\C{I}}}\ \forall_{\substack{\B{X}_0 \in \R_{\C{I}} \text{s.t.}\\(\B{X}_0)_I = \B{\xi}_I}} : \lim_{t\to\infty} \B{X}(t) = \B{X}^*_{\intervene(\B{X}_I = \B{\xi}_I)}.$$
\end{definition}
This definition can also be weakened by not demanding stability for all $\B{\xi}_I \in \R_I$, but for smaller subsets instead. 
Again, we will use this strong condition for clarity of exposition, although in a concrete example to be discussed later (see Section~\ref{sec:example_oscillators}),
we will actually weaken the stability assumption along these lines.


\subsubsection{Example: the Lotka-Volterra model}

The ODE \eref{eq:LotkaVolterraODE} of the Lotka-Volterra model is not stable, as discussed in detail 
by \cite{Murray02}. Indeed, it has two equilibrium states,
$(X_1^*,X_2^*) = (0,0)$ and $(X_1^*,X_2^*) = (\theta_{22} / \theta_{21}, \theta_{11} / \theta_{12})$.
The Jacobian of the dynamics is given by:
$$\nabla \B{f}(\B{X}) = \begin{pmatrix}
  \theta_{11} - \theta_{12} X_2 & -\theta_{12} X_1 \\
  \theta_{21} X_2               & -\theta_{22} + \theta_{21} X_1 \\
\end{pmatrix}$$
In the first equilibrium state, it has a positive and a negative eigenvalue ($\theta_{11}$
and $-\theta_{22}$, respectively), and hence this equilibrium is unstable. At the second equilibrium
state, the Jacobian simplifies to 
$$\begin{pmatrix}
  0 & -\theta_{12} \theta_{22} / \theta_{21} \\
  \theta_{21} \theta_{11} / \theta_{12} & 0 \\
\end{pmatrix}$$
which has two imaginary eigenvalues, $\pm i\sqrt{\theta_{11}\theta_{22}}$. 
One can show \citep{Murray02} that 
the steady state of the system is an undamped oscillation around this equilibrium.

The intervened system \eref{eq:LotkaVolterraODE_intervened} is only generically stable, i.e., for most values of 
$\xi_2$: the unique stable equilibrium state is $(X_1^*,X_2^*) = (0, \xi_2)$ as long as $\theta_{11} - \theta_{12} \xi_2 \ne 0$.
If $\theta_{11} - \theta_{12} \xi_2 = 0$, there exists a family of equilibria $(X_1^*,X_2^*) = (c, \xi_2)$ with $c \ge 0$. 

\subsubsection{Example: damped harmonic oscillators}\label{sec:example_oscillators}

The favorite toy example of physicists is a system of coupled harmonic oscillators.
Consider a one-dimensional system of $D$ point masses $m_i$ ($i=1,\dots,D$) with positions $Q_i \in \RN$ and momenta $P_i \in \RN$, coupled by springs
with spring constants $k_i$ and equilibrium lengths $l_i$, under influence of friction with friction coefficients $b_i$, with fixed end positions
(see also Figure~\ref{fig:MassSpring_system}). 

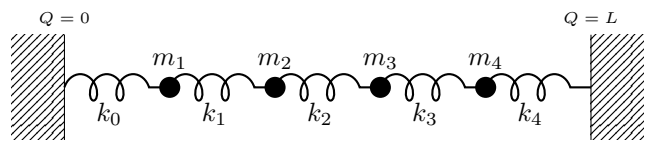
\begin{figure}[b]
  \vspace{-0.2cm}
  \centerline{\begin{tikzpicture}[scale=0.7]
    \fill[pattern=north east lines,draw=none] (-1,-1) rectangle (0,1);
    \draw (0,-1) -- (0,1);
    \fill[black] (2,0) circle (.2);
    \node (Q1) at (2,0.5) {$m_1$};
    \node (Q2) at (4,0.5) {$m_2$};
    \node (Q3) at (6,0.5) {$m_3$};
    \node (Q4) at (8,0.5) {$m_4$};
    \begin{scope}[xshift=-1.5mm]
      \node (k0) at (1,-0.5) {$k_0$};
      \node (k1) at (3,-0.5) {$k_1$};
      \node (k2) at (5,-0.5) {$k_2$};
      \node (k3) at (7,-0.5) {$k_3$};
      \node (k4) at (9,-0.5) {$k_4$};
    \end{scope}
    \fill[black] (4,0) circle (.2);
    \fill[black] (6,0) circle (.2);
    \fill[black] (8,0) circle (.2);
    \draw (10,-1) -- (10,1);
    \fill[pattern=north east lines,draw=none] (10,-1) rectangle (11,1);
    \draw[thick,decorate,decoration={coil,aspect=0.7,amplitude=5}] (0,0) -- (2,0);
    \draw[thick,decorate,decoration={coil,aspect=0.7,amplitude=5}] (2,0) -- (4,0);
    \draw[thick,decorate,decoration={coil,aspect=0.7,amplitude=5}] (4,0) -- (6,0);
    \draw[thick,decorate,decoration={coil,aspect=0.7,amplitude=5}] (6,0) -- (8,0);
    \draw[thick,decorate,decoration={coil,aspect=0.7,amplitude=5}] (8,0) -- (10,0);
    \node at (0,1.3) {\tiny $Q = 0$};
    \node at (10,1.3) {\tiny $Q = L$};
  \end{tikzpicture}}
  \caption{\label{fig:MassSpring_system}Mass-spring system for $D=4$.}
\end{figure}

We first sketch the 
qualitative behavior: there is a unique equilibrium position where the sum of forces vanishes for every single mass. Moving one or several masses
out of their equilibrium position stimulates vibrations of the entire system. Damped by friction, every mass converges to its unique and stable equilibrium position in the limit of infinite time. If one or several masses are fixed to positions different from their equilibrium points, the positions of the remaining masses still converge to unique (but different) equilibrium positions. 
The structural equations that we derive later will describe the change of the unconstrained equilibrium positions caused by fixing the others. 

The equations of motion for this system are given by:
\begin{align*}
  \dot{P_i} & = k_i (Q_{i+1} - Q_i - l_i) \\
            & \phantom{=} - k_{i-1} (Q_i - Q_{i-1} - l_{i-1}) - \frac{b_i}{m_i} P_i \\
  \dot{Q_i} &= P_i / m_i
\end{align*}
where we define $Q_0 := 0$ and $Q_{D+1} := L$.
The graph of this ODE is depicted in Figure~\ref{fig:MassSpring_graph}.
At equilibrium (for $t \to \infty$), all momenta
vanish, and the following equilibrium equations hold:
\begin{align*}
  0 & = k_i (Q_{i+1} - Q_i - l_i) - k_{i-1} (Q_i - Q_{i-1} - l_{i-1}) \\
  0 & = P_i
\end{align*}
which is a linear system of equations in terms of the $Q_i$. There are $D$ equations for $D$ unknowns $Q_1,\dots,Q_D$,
and one can easily check that it has a unique solution. 

A perfect intervention on $Q_i$ corresponds to fixating the position of the $i$'th mass.
Physically, this is achieved by adding a force that drives
$Q_i$ to some fixed location, i.e., the intervention on $Q_i$ is achieved through modifying the equation of 
motion for $\dot{P_i}$. To deal with this example in our framework, we consider the pairs $X_i := (Q_i,P_i) \in \RN^2$ 
to be the elementary variables.
Consider for example the perfect intervention $\intervene(X_2 = (\xi_2, 0))$, which effectively replaces the dynamical equations 
$\dot{Q_2}$ and $\dot{P_2}$ by $\dot{Q_2} = 0$, $\dot{P_2} = 0$ and their initial conditions by 
$(\B{Q}_0)_2 = \xi_2$, $(\B{P}_0)_2 = 0$. 
The graph of the corresponding ODE is depicted in Figure~\ref{fig:MassSpring_intervened_graph}.
Because of the friction, also this intervened system converges to a unique equilibrium that does not depend
on the initial value.

This holds more generally: for any perfect intervention on (any number) of pairs
$X_i$ of the type $\intervene(X_i = (\xi_i, 0))$, the intervened system will
converge towards a unique equilibrium because of the damping term. Interventions
that result in a nonzero value for any momentum $P_i$ while the corresponding position is fixed are physically impossible,
and hence will not be considered. Concluding, we have seen that the mass-spring system is 
stable with respect to perfect interventions on any number of position variables,
which we model mathematically as a joint intervention on the corresponding pairs
of position and momentum variables.

\begin{figure}[t]
  \centering
  \subfigure[$\C{G}_{\C{D}}$]{\begin{tikzpicture}
    \node[var] (Q1) at (0,0) {$Q_1$};
    \node[var] (Q2) at (2,0) {$Q_2$};
    \node[var] (Q3) at (4,0) {$Q_3$};
    \node[var] (Q4) at (6,0) {$Q_4$};
    \node[var] (P1) at (0,1.5) {$P_1$};
    \node[var] (P2) at (2,1.5) {$P_2$};
    \node[var] (P3) at (4,1.5) {$P_3$};
    \node[var] (P4) at (6,1.5) {$P_4$};
    \draw[dashed] (-0.5,-0.5) rectangle (1,2);
    \node at (0.7,-0.2) {$X_1$};
    \draw[dashed] (1.5,-0.5) rectangle (3,2);
    \node at (2.7,-0.2) {$X_2$};
    \draw[dashed] (3.5,-0.5) rectangle (5,2);
    \node at (4.7,-0.2) {$X_3$};
    \draw[dashed] (5.5,-0.5) rectangle (7,2);
    \node at (6.7,-0.2) {$X_4$};
    \draw[arr, bend left] (P1) edge (Q1);
    \draw[arr, bend left] (P2) edge (Q2);
    \draw[arr, bend left] (P3) edge (Q3);
    \draw[arr, bend left] (P4) edge (Q4);
    \draw[arr, bend left] (Q1) edge (P1);
    \draw[arr] (Q2) edge (P1);
    \draw[arr] (Q1) edge (P2);
    \draw[arr, bend left] (Q2) edge (P2);
    \draw[arr] (Q3) edge (P2);
    \draw[arr] (Q2) edge (P3);
    \draw[arr, bend left] (Q3) edge (P3);
    \draw[arr] (Q4) edge (P3);
    \draw[arr] (Q3) edge (P4);
    \draw[arr, bend left] (Q4) edge (P4);
  \end{tikzpicture}\label{fig:MassSpring_graph}}
  \subfigure[$\C{G}_{\C{D}_{\intervene(Q_2 = \xi_2, P_2 = 0)}}$]{\begin{tikzpicture}
    \node[var] (Q1) at (0,0) {$Q_1$};
    \node[var] (Q2) at (2,0) {$Q_2$};
    \node[var] (Q3) at (4,0) {$Q_3$};
    \node[var] (Q4) at (6,0) {$Q_4$};
    \node[var] (P1) at (0,1.5) {$P_1$};
    \node[var] (P2) at (2,1.5) {$P_2$};
    \node[var] (P3) at (4,1.5) {$P_3$};
    \node[var] (P4) at (6,1.5) {$P_4$};
    \draw[dashed] (-0.5,-0.5) rectangle (1,2);
    \node at (0.7,-0.2) {$X_1$};
    \draw[dashed] (1.5,-0.5) rectangle (3,2);
    \node at (2.7,-0.2) {$X_2$};
    \draw[dashed] (3.5,-0.5) rectangle (5,2);
    \node at (4.7,-0.2) {$X_3$};
    \draw[dashed] (5.5,-0.5) rectangle (7,2);
    \node at (6.7,-0.2) {$X_4$};
    \draw[arr, bend left] (P1) edge (Q1);
    \draw[arr, bend left] (P3) edge (Q3);
    \draw[arr, bend left] (P4) edge (Q4);
    \draw[arr, bend left] (Q1) edge (P1);
    \draw[arr] (Q2) edge (P1);
    \draw[arr] (Q2) edge (P3);
    \draw[arr, bend left] (Q3) edge (P3);
    \draw[arr] (Q4) edge (P3);
    \draw[arr] (Q3) edge (P4);
    \draw[arr, bend left] (Q4) edge (P4);
  \end{tikzpicture}\label{fig:MassSpring_intervened_graph}}
  \caption{Graphs of the dynamics of the mass-spring system for $D=4$. (a) Observational situation (b) Intervention $\intervene(Q_2 = \xi_2, P_2 = 0)$.}
\end{figure}
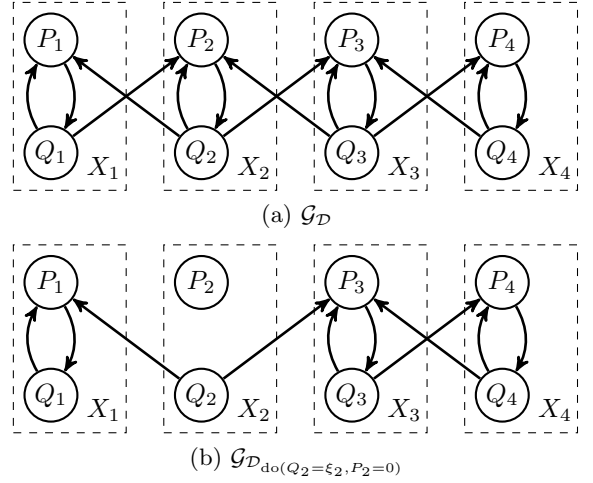

\section{Equilibrium equations}

In this section, we will study how the dynamical equations give rise to \emph{equilibrium equations} that
describe equilibrium states, and how these change under perfect interventions. This is an intermediate
representation on our way to structural causal models.

\subsection{Observational system}
At equilibrium, the rate of change of any variable is zero, by definition. Therefore, an equilibrium state of the
observational system $\C{D}$ defined in \eqref{eq:ODE_system} satisfies the following 
\emph{equilibrium equations}:
\begin{equation}\label{eq:ODE_equil_eqn}
  0 = f_i(\B{X}_{\Gpa{\C{D}}{i}}) \qquad \forall i \in \C{I}.
\end{equation}
This is a set of $D$ coupled equations with unknowns $X_1, \dots, X_D$. The stability assumption 
(cf.\ Definition~\ref{def:stability}) implies that there exists a unique solution $\B{X}^*$ of the equilibrium equations \eref{eq:ODE_equil_eqn}.

\subsection{Intervened systems}\label{sec:LEE_interventions}
Similarly, for the intervened system $\C{D}_{\intervene(\B{X}_i = \B{\xi}_i)}$ defined in \eqref{eq:ODE_system_intervened}, we obtain the following 
equilibrium equations:
\begin{equation}\label{eq:ODE_equil_eqn_intervened}
  \begin{cases} 
    0 = X_i - \xi_i \qquad & \forall i \in \C{I} \\
    0 = f_j(\B{X}_{\Gpa{\C{D}}{j}}) \qquad & \forall j \in \C{I} \setminus I
  \end{cases}
\end{equation}
If the system is stable with respect to this intervention (cf.\ Definition~\ref{def:interventional_stability}), then there exists a unique solution 
$\B{X}^*_{\intervene(\B{X}_I=\B{\xi}_I)}$ of the intervened equilibrium equations \eref{eq:ODE_equil_eqn_intervened}.

Note that we can also go directly from the equilibrium equations \eref{eq:ODE_equil_eqn} of the observational system
$\C{D}$ to the equilibrium equations \eref{eq:ODE_equil_eqn_intervened} of the intervened system $\C{D}_{\intervene(\B{X}_I = \B{\xi}_I)}$,
simply by replacing the equilibrium equations ``$0 = f_i(\B{X}_{\Gpa{\C{D}}{i}})$'' for $i \in I$ by 
equations of the form ``$0 = X_i - \xi_i$''.
Indeed, note that the modified dynamical equation
$$\dot X_i = f_i(\B{X}_{\Gpa{\C{D}}{i}}) + \kappa (\xi_i - X_i)$$
yields an equilibrium equation of the form
$$0 = f_i(\B{X}_{\Gpa{\C{D}}{i}}) + \kappa (\xi_i - X_i)$$
which, in the limit $\kappa \to \infty$, reduces to
$0 = X_i - \xi_i$ (assuming that $f_i$ is bounded).
This seemingly trivial observation will turn out to be quite important.

\subsection{Labeling equilibrium equations}\label{sec:ODE_to_LEE}

If we would consider the equilibrium equations as a set of \emph{unlabeled}
equations $\{\C{E}_i : i \in \C{I}\}$, where $\C{E}_i$ denotes the equilibrium
equation ``$0 = f_i(\B{X}_{\Gpa{\C{D}}{i}})$'' (or ``$0 = X_i - \xi_i$'' after an
intervention) for $i \in \C{I}$, then we will not be able to
correctly predict the result of interventions, as we do not know \emph{which}
of the equilibrium equations should be changed in order to model the particular
intervention. This information is present in the dynamical system
$\C{D}$ (indeed, the terms ``$\dot X_i$'' in the l.h.s.\ of the
dynamical equations in \eref{eq:ODE_system} indicate the targets of the intervention), but is lost when
considering the corresponding equilibrium equations \eref{eq:ODE_equil_eqn} 
as an unlabeled set (because the terms ``$\dot X_i$'' are all replaced by zeroes).

This important information can be preserved by labeling the equilibrium equations.
Indeed, the \emph{labeled} set of equilibrium equations $\C{E} := \{(i, \C{E}_i) : i \in \C{I}\}$
contains all information needed to predict how equilibrium states
change on arbitrary (perfect) interventions. Under an
intervention $\intervene(\B{X}_I = \B{\xi}_I)$, the equilibrium equations are changed as follows:
for each intervened component $i \in I$, the equilibrium equation $\C{E}_i$ is
replaced by the equation $\C{\tilde E}_i$ defined as ``$0 = X_i - \xi_i$'', whereas the
other equilibrium equations $\C{E}_j$ for $j \in \C{I} \setminus I$ do not change.
Assuming that the dynamical system is stable with respect to this intervention, this
modified system of equilibrium equations describes the new equilibrium obtained under
the intervention. We conclude that the information about the values of equilibrium 
states and how these change under perfect interventions is encoded in the 
labeled equilibrium equations.

\subsection{Labeled equilibrium equations}

The previous considerations motivate the following formal
definition of a system of Labeled Equilibrium Equations (LEE) and
their semantics under interventions.
\begin{definition}
  A system of \emph{Labeled Equilibrium Equations (LEE)} $\C{E}$ for $D$ variables
  $\{X_i\}_{i\in \C{I}}$ with $\C{I} := \{1,\dots,D\}$ consists of $D$ \emph{labeled equations}
  of the form
    \begin{equation}\label{eq:labeled_equations}
      \C{E}_i: \quad 0 = g_i(\B{X}_{\Gpa{\C{E}}{i}}), \qquad i \in \C{I},
    \end{equation}
  where $\Gpa{\C{E}}{i} \subseteq \C{I}$ is the set of (indices of) \emph{parents}
  of variable $X_i$, and each $g_i : \R_{\Gpa{\C{E}}{i}} \to \R_i$ is a function.
\end{definition}
The \emph{structure} of an LEE $\C{E}$ can be represented as a directed graph 
$\C{G}_{\C{E}}$, with one node for each variable and a directed
edge from $X_i$ to $X_j$ (with $j \ne i$) if and only if $\C{E}_i$ depends on $X_j$.

A perfect intervention transforms an LEE into another (intervened) LEE:
\begin{definition}
Let $I \subseteq \C{I}$ and $\B{\xi}_I \in \R_I$. For the perfect intervention $\intervene(\B{X}_I = \B{\xi}_I)$ that forces the variables 
$\B{X}_I$ to take the value $\B{\xi}_I$, the intervened LEE 
$\C{E}_{\intervene(\B{X}_I=\B{\xi}_I)}$ is obtained by replacing the
labeled equations of the original LEE $\C{E}$ by the following modified labeled equations:
  \begin{equation}\label{eq:labeled_equations_intervened}
  0 = \begin{cases}
    X_i - \xi_i & i \in I \\
    g_i(\B{X}_{\Gpa{\C{E}}{i}}) & i \in \C{I} \setminus I.
  \end{cases}\end{equation}
\end{definition}

We define the concept of solvability for LEEs that mirrors the definition of stability for ODEs:
\begin{definition}
An LEE $\C{E}$ is called \emph{solvable} if there exists a unique solution $\B{X}^*$ to the system of (labeled) equations $\{\C{E}_i\}$.
An LEE $\C{E}$ is called \emph{solvable with respect to $\C{J} \subseteq \C{P}(\C{I})$} if
for all $I \in \C{J}$ and for all $\B{\xi}_I \in \R_I$, the intervened LEE $\C{E}_{\intervene(\B{X}_I = \B{\xi}_I)}$ 
is solvable.
\end{definition}

As we saw in the previous section, an ODE induces an LEE in a straightforward way.
The graph $\C{G}_{\C{E}_{\C{D}}}$ of the induced LEE $\C{E}_{\C{D}}$ is equal to the graph $\C{G}_{\C{D}}$ of the ODE $\C{D}$.
It is immediate that if the ODE $\C{D}$ is stable, then the induced LEE $\C{E}_{\C{D}}$ is solvable.
As we saw at the end of Section~\ref{sec:LEE_interventions}, our ways of modeling interventions on
ODEs and on LEEs are compatible. We will spell out this important result in detail.
\begin{theorem}\label{theo:ODE_to_LEE}
Let $\C{D}$ be an ODE, $I \subseteq \C{I}$ and $\B{\xi}_I \in \R_I$. (i) Applying the perfect intervention
$\intervene(\B{X}_I = \B{\xi}_I)$ to the induced LEE $\C{E}_{\C{D}}$ gives the same result as constructing
the LEE corresponding to the intervened ODE $\C{D}_{\intervene(\B{X}_I = \B{\xi}_I)}$:
$$(\C{E}_{\C{D}})_{\intervene(\B{X}_I = \B{\xi}_I)} = \C{E}_{\C{D}_{\intervene(\B{X}_I = \B{\xi}_I)}}.$$ 
(ii) Stability of the ODE $\C{D}$ with respect to the intervention $\intervene(\B{X}_I = \B{\xi}_I)$
implies solvability of the induced intervened LEE $\C{E}_{\C{D}_{\intervene(\B{X}_I = \B{\xi}_I)}}$,
and the corresponding equilibrium and solution $\B{X}^*_{\intervene(\B{X}_I = \B{\xi}_I)}$ are identical.\qed
\end{theorem}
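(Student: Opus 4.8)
I would treat the two parts separately, since (i) is purely definitional book-keeping while (ii) carries the genuine content linking dynamical equilibria to solutions of the labeled equations. For part (i), I would unfold both sides of the claimed identity and compare them label by label. On the left, the induced LEE $\C{E}_{\C{D}}$ consists of the labeled equations $\C{E}_i: 0 = f_i(\B{X}_{\Gpa{\C{D}}{i}})$, and applying $\intervene(\B{X}_I = \B{\xi}_I)$ through the LEE-intervention rule \eref{eq:labeled_equations_intervened} replaces exactly the labels $i \in I$ by $0 = X_i - \xi_i$ while leaving the labels $j \in \C{I} \setminus I$ untouched. On the right, I would pass from the intervened ODE to its equilibrium equations using the derivation at the end of Section~\ref{sec:LEE_interventions}: the label $i \in I$ carries the equilibrium equation $0 = X_i - \xi_i$ (the $\kappa \to \infty$ limit of $0 = f_i + \kappa(\xi_i - X_i)$, using boundedness of $f_i$), while the labels $j \notin I$ retain $0 = f_j(\B{X}_{\Gpa{\C{D}}{j}})$. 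The two labeled systems then agree term by term, and since the intervention leaves the parent sets of the non-intervened labels unchanged, their graphs coincide as well. The only point needing care is to insist on the limiting form $0 = X_i - \xi_i$ for an intervened label rather than the vacuous $0 = 0$ one might naively read off from $\dot{X}_i = 0$; this is exactly the ``seemingly trivial observation'' already recorded in the excerpt.

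For part (ii), the key idea is to identify the solution set of the intervened LEE with the set of stationary states of the intervened ODE that respect the constraint $\B{X}_I = \B{\xi}_I$. Reading off the intervened dynamics, in which $\dot{X}_i = 0$ for $i \in I$ and $\dot{X}_j = f_j$ for $j \notin I$, a state $\B{X}$ solves the intervened LEE if and only if $X_i = \xi_i$ for $i \in I$ and $f_j(\B{X}_{\Gpa{\C{D}}{j}}) = 0$ for $j \notin I$, which is precisely the condition that $\B{X}$ be a stationary point of the intervened ODE with $\B{X}_I = \B{\xi}_I$. First I would check that the hypothesised equilibrium $\B{X}^*_{\intervene(\B{X}_I = \B{\xi}_I)}$ is one such solution: it is stationary by the definition of equilibrium, and because $\dot{X}_i = 0$ together with $X_i(0) = \xi_i$ forces $X_i(t) \equiv \xi_i$ for each $i \in I$, its limit also satisfies $\B{X}_I = \B{\xi}_I$. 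Hence the intervened LEE has at least one solution, and that solution equals the equilibrium.

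The genuine obstacle is uniqueness, which I would dispatch by feeding candidate solutions back as initial conditions. Suppose $\B{X}^\star$ is any solution of the intervened LEE. Then $\B{X}^\star$ is a stationary point of the intervened ODE with $\B{X}^\star_I = \B{\xi}_I$, so the constant trajectory $\B{X}(t) \equiv \B{X}^\star$ solves the intervened ODE from an admissible initial condition (one with $(\B{X}_0)_I = \B{\xi}_I$). Its limit as $t \to \infty$ is $\B{X}^\star$ itself, yet Definition~\ref{def:interventional_stability} forces that limit to be the unique equilibrium $\B{X}^*_{\intervene(\B{X}_I = \B{\xi}_I)}$; therefore $\B{X}^\star = \B{X}^*_{\intervene(\B{X}_I = \B{\xi}_I)}$. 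This simultaneously establishes solvability of the intervened LEE and the asserted identity of solution and equilibrium. The subtle assumption worth flagging is that a stationary point may legitimately be used as an initial condition: this relies on the smoothness of the $f_i$, so that the constant trajectory really is the solution issued from that state, and on the intervention leaving the pinned components' initial values free to be set to $\B{\xi}_I$.
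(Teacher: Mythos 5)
Your proof is correct and follows essentially the same route the paper intends: the paper states Theorem~\ref{theo:ODE_to_LEE} with a terminal QED, treating it as immediate from the constructions of Sections~\ref{sec:LEE_interventions} and~\ref{sec:ODE_to_LEE}, and your part (i) is exactly that definitional unfolding, including the paper's own ``seemingly trivial'' point that the intervened label must read $0 = X_i - \xi_i$ (via the $\kappa \to \infty$ limit) rather than the vacuous $0=0$. Your part (ii) --- identifying solutions of the intervened LEE with stationary points satisfying the constraint, and then feeding any such point back as an admissible initial condition so that its constant trajectory forces it, by Definition~\ref{def:interventional_stability}, to coincide with the unique equilibrium --- is precisely the argument the paper leaves implicit, and it is sound.
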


\subsection{Example: damped harmonic oscillators}

Consider again the example of the damped, coupled harmonic oscillators of section~\ref{sec:example_oscillators}.
The labeled equilibrium equations are given explicitly by:
\begin{equation}\label{eq:LEE_oscillators}
  \C{E}_i: \qquad \begin{cases}
    0 & = k_i (Q_{i+1} - Q_i - l_i) \\
      & \phantom{=} - k_{i-1} (Q_i - Q_{i-1} - l_{i-1}) \\
    0 & = P_i
  \end{cases}
\end{equation}

\section{Structural Causal Models}

In this section we will show how an LEE representation can be mapped to the more popular
representation of Structural Causal Models, also known as Structural Equation Models \citep{Bollen89}. We follow the terminology of 
\citet{Pearl00}, but consider here only the subclass of \emph{deterministic} SCMs.

\subsection{Observational}

The following definition is a special case of the more general definition in \citep[Section 1.4.1]{Pearl00}:
\begin{definition}
  A \emph{deterministic Structural Causal Model (SCM)} $\C{M}$ on $D$ variables
  $\{X_i\}_{i\in \C{I}}$ with $\C{I} := \{1,\dots,D\}$ consists of $D$ 
  \emph{structural equations} of the form
    \begin{equation}\label{eq:structural_equations}
    X_i = h_i(\B{X}_{\Gpa{\C{M}}{i}}), \qquad i \in \C{I},
    \end{equation}
  where $\Gpa{\C{M}}{i} \subseteq \C{I} \setminus \{i\}$ is the set of (indices of) \emph{parents}
  of variable $X_i$, and each $h_i : \R_{\Gpa{\C{M}}{i}} \to \R_i$ is a function.
\end{definition}
Each structural equation contains a function $h_i$ that depends on
the components of $\B{X}$ in $\Gpa{\C{M}}{i}$. We think of the parents $\Gpa{\C{M}}{i}$ as the
\emph{direct causes} of $X_i$ (relative to $\B{X}_{\C{I}}$) and the function $h_i$ as
the \emph{causal mechanism} that maps the direct causes to the effect $X_i$.
Note that the l.h.s.\ of a structural equation by definition contains only 
$X_i$, and that the r.h.s.\ is a function of variables \emph{excluding}
$X_i$ itself. In other words, $X_i$ is not considered to be a direct cause
of itself. The \emph{structure} of an SCM $\C{M}$ is often represented as a 
directed graph $\C{G}_{\C{M}}$, with one node for each variable and a directed
edge from $X_i$ to $X_j$ (with $j \ne i$) if and only if $h_i$ depends on $X_j$.
Note that this graph does not contain ``self-loops'' (edges pointing from a node to itself), by definition.

\subsection{Interventions}\label{sec:SCM_interventions}

A Structural Causal Model $\C{M}$ comes with a specific semantics for modeling perfect
interventions \citep{Pearl00}:
\begin{definition}
Let $I \subseteq \C{I}$ and $\B{\xi}_I \in \R_I$. For the perfect intervention $\intervene(\B{X}_I = \B{\xi}_I)$ that forces the variables 
$\B{X}_I$ to take the value $\B{\xi}_I$, the intervened SCM 
$\C{M}_{\intervene(\B{X}_I=\B{\xi}_I)}$ is obtained by replacing the
structural equations of the original SCM $\C{M}$ by the following modified structural equations:
  \begin{equation}\label{eq:structural_equations_intervened}
  X_i = \begin{cases}
    \xi_i & i \in I \\
    h_i(\B{X}_{\Gpa{\C{M}}{i}}) & i \in \C{I} \setminus I.
  \end{cases}\end{equation}
\end{definition}
The reason that the equations in a SCM are called ``structural equations'' (instead
of simply ``equations'') is that they also contain information for modeling 
interventions, in a similar way as the labeled equilibrium equations contain this 
information. In particular, the l.h.s.\ of the structural equations indicate the 
targets of an intervention.\footnote{In \citet{Pearl00}'s words: ``Mathematically, the distinction between structural and algebraic equations is that
the latter are characterized by the set of solutions to the entire system of
equations, whereas the former are characterized by the solutions of each
individual equation. The implication is that any subset of structural equations is,
in itself, a valid model of reality---one that prevails under some set of interventions.''}

\subsection{Solvability}
Similarly to our definition for LEEs, we define:
\begin{definition}
An SCM $\C{M}$ is called \emph{solvable} if there exists a unique solution $\B{X}^*$ to the system
of structural equations. An SCM $\C{M}$ is called \emph{solvable with respect to $\C{J} \subseteq \C{P}(\C{I})$} if
for all $I \in \C{J}$ and for all $\B{\xi}_I \in \R_I$, the intervened SCM $\C{M}_{\intervene(\B{X}_I = \B{\xi}_I)}$ 
is solvable.
\end{definition}
Note that each (deterministic) SCM $\C{M}$ with acyclic graph $\C{G}_{\C{M}}$ is solvable, even with
respect to the set of all possible intervention targets, $\C{P}(\C{I})$. This is not necessarily true if
directed cycles are present.

\subsection{From labeled equilibrium equations to deterministic SCMs}\label{sec:LEE_to_SCM}

Finally, we will now show that under certain stability assumptions on an ODE $\C{D}$, we can represent the information
about (intervened) equilibrium states that is contained in the corresponding set of labeled equilibrium equations $\C{E}_{\C{D}}$ as an SCM
$\C{M}_{\C{E}_{\C{D}}}$.

First, given an LEE $\C{E}$, we will construct an induced SCM $\C{M}_{\C{E}}$, provided certain
solvability conditions hold:
\begin{definition}
If
for each $i \in \C{I}$, the LEE $\C{E}$ is solvable with respect to some $I_i \subseteq \C{I}$
with $\Gpa{\C{E}}{i} \setminus \{i\} \subseteq I_i \subseteq \C{I} \setminus \{i\}$, then
it is called \emph{structurally solvable}.
\end{definition}
If the LEE $\C{E}$ is structurally solvable, we can proceed as follows. Let $i \in \C{I}$. We define the induced parent set $\Gpa{\C{M}_{\C{E}}}{i} := \Gpa{\C{E}}{i} \setminus \{i\}$. 
Assuming structural solvability of $\C{E}$, under the perfect intervention 
$\intervene(\B{X}_{I_i} = \B{\xi}_{I_i})$, there is a unique solution 
$\B{X}^*_{\intervene(\B{X}_{I_i} = \B{\xi}_{I_i})}$ to the intervened LEE, for any value of 
$\B{\xi}_{I_i} \in \R_{I_i}$. This defines a function $h_i : \R_{\Gpa{\C{M}_{\C{E}}}{i}} \to \R_i$ given by the $i$'th
component $h_i(\B{\xi}_{\Gpa{\C{M}_{\C{E}}}{i}}) := \big(\B{X}^*_{\intervene(\B{X}_{I_i} = \B{\xi}_{I_i})}\big)_i$.
The $i$'th structural equation of the induced SCM $\C{M}_{\C{E}}$ is defined as:
$$X_i = h_i(\B{X}_{\Gpa{\C{M}_{\C{E}}}{i}}).$$
Note that this equation is \emph{equivalent} to the labeled equation $\C{E}_i$ in the sense that they
have identical solution sets $\{(X_i^*, \B{X}^*_{\Gpa{\C{M}_{\C{E}}}{i}})\}$. Repeating this procedure for all $i \in \C{I}$, we 
obtain the induced SCM $\C{M}_{\C{E}}$.

This construction is designed to preserve the important mathematical structure. In particular:
\begin{lemma}\label{lemm:LEE_to_SCM}
Let $\C{E}$ be an LEE, $I \subseteq \C{I}$ and $\B{\xi}_I \in \R_I$
and consider the perfect intervention $\intervene(\B{X}_I = \B{\xi}_I)$. 
Suppose that both the LEE $\C{E}$ and the intervened LEE $\C{E}_{\intervene(\B{X}_I = \B{\xi}_I)}$ are structurally solvable.
(i) Applying the intervention $\intervene(\B{X}_I = \B{\xi}_I)$ to the induced SCM $\C{M}_{\C{E}}$ gives the same result as constructing
the SCM corresponding to the intervened LEE $\C{E}_{\intervene(\B{X}_I = \B{\xi}_I)}$:
$$(\C{M}_{\C{E}})_{\intervene(\B{X}_I = \B{\xi}_I)} = \C{M}_{\C{E}_{\intervene(\B{X}_I = \B{\xi}_I)}}.$$ 
(ii) Solvability of the LEE $\C{E}$ with respect to the intervention $\intervene(\B{X}_I = \B{\xi}_I)$
implies solvability of the induced SCM $\C{M}_{\C{E}}$ with respect to the same intervention $\intervene(\B{X}_I = \B{\xi}_I)$,
and their respective solutions $\B{X}^*_{\intervene(\B{X}_I = \B{\xi}_I)}$ are identical.
\end{lemma}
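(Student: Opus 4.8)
The plan is to prove both parts by a case analysis on whether an index is intervened upon, while comparing the induced structural functions through their \emph{graphs} rather than through the construction recipe. Throughout I would abbreviate $\C{E}' := \C{E}_{\intervene(\B{X}_I = \B{\xi}_I)}$ and isolate, as a first observation, the key property already established when $\C{M}_{\C{E}}$ was constructed: for a structurally solvable LEE $\C{F}$ and each $i \in \C{I}$, the induced structural equation $X_i = h^{\C{F}}_i(\B{X}_{\Gpa{\C{M}_{\C{F}}}{i}})$ has exactly the same solution set as the labeled equation $\C{F}_i$. Since a function is determined by its graph, $h^{\C{F}}_i$ is therefore the \emph{unique} function whose graph equals the solution set of $\C{F}_i$; this uniqueness is what will make the comparisons below independent of the auxiliary sets $I_i$ chosen in the construction.

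For part (i) I would expand both sides as systems of structural equations and match them index by index. For $i \in I$, the left-hand side $(\C{M}_{\C{E}})_{\intervene(\B{X}_I = \B{\xi}_I)}$ replaces the $i$-th equation by $X_i = \xi_i$ with empty parent set, whereas on the right the $i$-th labeled equation of $\C{E}'$ is $0 = X_i - \xi_i$, so that $\Gpa{\C{M}_{\C{E}'}}{i} = \{i\} \setminus \{i\} = \emptyset$ and, its solution set being the singleton $\{\xi_i\}$, the induced function is the constant $\xi_i$; the two agree. For $j \in \C{I} \setminus I$, the intervention leaves the $j$-th equation of $\C{M}_{\C{E}}$ untouched, while on the right $\C{E}'_j = \C{E}_j$ and $\Gpa{\C{E}'}{j} = \Gpa{\C{E}}{j}$, so the induced parent sets coincide and the two induced functions (call them $h_j$ from $\C{E}$ and $h'_j$ from $\C{E}'$) both have graph equal to the solution set of the one and the same labeled equation $\C{E}_j$, hence are equal by the uniqueness above. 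Collecting the two cases yields equality of the two SCMs.

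For part (ii) I would first record the general fact that the LEE-to-SCM construction preserves solution sets: for any structurally solvable LEE $\C{F}$, a tuple solves the induced SCM $\C{M}_{\C{F}}$ if and only if it satisfies every structural equation, which by the equation-wise solution-set identity noted above holds if and only if it satisfies every labeled equation of $\C{F}$, so $\C{M}_{\C{F}}$ and $\C{F}$ have identical solution sets. Applying this with $\C{F} = \C{E}'$ and invoking part (i) to identify $\C{M}_{\C{E}'}$ with $(\C{M}_{\C{E}})_{\intervene(\B{X}_I = \B{\xi}_I)}$, I obtain that the intervened SCM shares its solution set with $\C{E}'$. Solvability of $\C{E}$ with respect to the intervention means precisely that $\C{E}'$ has a unique solution; hence the intervened SCM has a unique solution equal to it, which is exactly solvability of $\C{M}_{\C{E}}$ with respect to the intervention together with equality of the two solutions.

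The step I expect to be the main obstacle is the matching $h_j = h'_j$ for $j \in \C{I} \setminus I$ in part (i): a priori these functions are built from different base LEEs ($\C{E}$ versus $\C{E}'$) and from possibly different auxiliary intervention sets in the construction, so they are not the same object by definition. The clean way around this is to refuse to compare the construction recipes and instead compare the functions through their graphs, using that each induced function is pinned down uniquely by the requirement that its graph equal the solution set of the corresponding labeled equation; as that labeled equation is literally the same for $\C{E}$ and $\C{E}'$ when $j \notin I$, the functions must agree. A minor secondary point is the bookkeeping of parent sets and self-loops --- for instance that $\Gpa{\C{E}'}{i} = \{i\}$ for $i \in I$ collapses to an empty SCM parent set --- but this is routine once the graph characterization is in hand.
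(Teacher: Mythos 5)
Your proof is correct and takes essentially the same route as the paper's: both arguments rest on the per-equation solution-set equivalence between each labeled equation $\C{E}_i$ and its induced structural equation $X_i = h_i(\B{X}_{\Gpa{\C{M}_{\C{E}}}{i}})$ (which the paper asserts as part of the construction of $\C{M}_{\C{E}}$), and both lift this to equality of the solution sets of the full systems while noting that the equivalence survives the replacement of equations by $0 = X_i - \xi_i$ and $X_i = \xi_i$ under intervention. Your more detailed case analysis for part (i) --- in particular the observation that the equivalence pins down each $h_i$ uniquely by its graph, hence independently of the auxiliary sets $I_i$ --- simply spells out what the paper compresses into ``directly follows from the construction.''
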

\begin{proof}
The first statement directly follows from the construction of the induced SCM. The key observation regarding solvability is the following.
From the construction above it directly follows that
\begin{equation*}\begin{split}
  & \forall_{\B{X}_{\Gpa{\C{E}}{i}} \in \R_{\Gpa{\C{E}}{i}}}: \\ 
  & \qquad 0 = g_i(\B{X}_{\Gpa{\C{E}}{i}}) \iff X_i = h_i(\B{X}_{\Gpa{\C{E}}{i} \setminus \{i\}}).
\end{split}\end{equation*}
This trivially implies that
\begin{equation*}
  \forall_{\B{X} \in \R_{\C{I}}}: 0 = g_i(\B{X}_{\Gpa{\C{E}}{i}}) \iff X_i = h_i(\B{X}_{\Gpa{\C{M}_{\C{E}}}{i}}).
\end{equation*}
This implies that each simultaneous solution of all labeled equations is a simultaneous
solution of all structural equations, and vice versa:
\begin{equation*}\begin{split}
  \forall_{\B{X} \in \R_{\C{I}}}: \qquad & \Big( \big[\forall_{i \in \C{I}}: 0 = g_i(\B{X}_{\Gpa{\C{E}}{i}})\big] \\
                                          & \phantom{\Big(} \iff \big[\forall_{i \in \C{I}}: X_i = h_i(\B{X}_{\Gpa{\C{M}_{\C{E}}}{i}})\big]\Big).
\end{split}\end{equation*}
The crucial point is that this still holds if an intervention replaces some of the equations (by $0 = X_i - \xi_i$
and $X_i = \xi_i$, respectively, for all $i \in I$).
\end{proof}

\subsection{From ODEs to deterministic SCMs}
We can now combine all the results and definitions so far to construct a deterministic
SCM from an ODE under certain stability conditions. We define:
\begin{definition}
An ODE $\C{D}$ is called \emph{structurally stable} if 
for each $i \in \C{I}$, the ODE
$\C{D}$ is stable with respect to some $I_i \subseteq \C{I}$ with
$\Gpa{\C{D}}{i} \setminus \{i\} \subseteq I_i \subseteq \C{I} \setminus \{i\}$.
\end{definition}
Consider the diagram in Figure~\ref{fig:commutative_diagram}.
Here, the labels of the arrows correspond with the numbers of the sections that discuss 
the corresponding mapping. The downward mappings correspond with a particular intervention
$\intervene(\B{X}_I = \B{\xi}_I)$, applied at the different levels (ODE, induced LEE, 
induced SCM). Our main result:
\begin{theorem}\label{theo:ODE_to_LEE_to_SCM}
If both the ODE $\C{D}$ and the intervened ODE $\C{D}_{\intervene(\B{X}_I = \B{\xi}_I)}$
are structurally stable, then:
(i) The diagram in Figure~\ref{fig:commutative_diagram} commutes.\footnote{This means
that it does not matter in which direction one follows the arrows, the end result will be the same.}
(ii) If furthermore, the ODE $\C{D}$ is stable with respect to the intervention $\intervene(\B{X}_I = \B{\xi}_I)$, the SCM $\C{M}_{\C{E}_{\C{D}_{\intervene(\B{X}_I = \B{\xi}_I)}}}$ has a unique solution that coincides with the stable equilibrium of the intervened ODE $\C{D}_{\intervene(\B{X}_I = \B{\xi}_I)}$.
\end{theorem}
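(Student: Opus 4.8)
The plan is to obtain this theorem purely by composing the two building-block results already established: Theorem~\ref{theo:ODE_to_LEE} for the ODE-to-LEE level and Lemma~\ref{lemm:LEE_to_SCM} for the LEE-to-SCM level. The diagram in Figure~\ref{fig:commutative_diagram} splits into a left square (ODE $\to$ LEE, with the intervention as the vertical arrow) and a right square (LEE $\to$ SCM). Commutativity of the left square is exactly Theorem~\ref{theo:ODE_to_LEE}(i), namely $(\C{E}_{\C{D}})_{\intervene(\B{X}_I = \B{\xi}_I)} = \C{E}_{\C{D}_{\intervene(\B{X}_I = \B{\xi}_I)}}$, and commutativity of the right square is Lemma~\ref{lemm:LEE_to_SCM}(i). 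Pasting the two squares then yields part (i).

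The first real step is to discharge the hypotheses of Lemma~\ref{lemm:LEE_to_SCM}, which demands that both the LEE $\C{E}_{\C{D}}$ and the intervened LEE $(\C{E}_{\C{D}})_{\intervene(\B{X}_I = \B{\xi}_I)}$ be structurally solvable. I would derive these from the two structural-stability assumptions via Theorem~\ref{theo:ODE_to_LEE}(ii). Concretely: structural stability of $\C{D}$ means that for each $i$ the ODE is stable with respect to some admissible $I_i$; Theorem~\ref{theo:ODE_to_LEE}(ii) turns each such stability into solvability of the corresponding induced intervened LEE, which is precisely structural solvability of $\C{E}_{\C{D}}$. Applying the same reasoning to the (separately assumed) structurally stable intervened ODE $\C{D}_{\intervene(\B{X}_I = \B{\xi}_I)}$ gives structural solvability of $\C{E}_{\C{D}_{\intervene(\B{X}_I = \B{\xi}_I)}}$; and by the left-square identity this is the same object as $(\C{E}_{\C{D}})_{\intervene(\B{X}_I = \B{\xi}_I)}$. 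Both hypotheses of Lemma~\ref{lemm:LEE_to_SCM} are thus met, so the right square commutes and part (i) is complete.

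For part (ii) I would add the extra hypothesis that $\C{D}$ is stable with respect to the intervention $\intervene(\B{X}_I = \B{\xi}_I)$ itself. Theorem~\ref{theo:ODE_to_LEE}(ii) then upgrades this to solvability of the induced intervened LEE $\C{E}_{\C{D}_{\intervene(\B{X}_I = \B{\xi}_I)}}$, with its unique solution equal to the stable equilibrium $\B{X}^*_{\intervene(\B{X}_I = \B{\xi}_I)}$ of the intervened ODE. Feeding this into Lemma~\ref{lemm:LEE_to_SCM}(ii)---applied with the trivial (empty) intervention, so that ``solvability with respect to the intervention'' reduces to plain solvability, and whose structural-solvability hypothesis is already available from the structural stability of $\C{D}_{\intervene(\B{X}_I = \B{\xi}_I)}$---transfers solvability and the identical solution to the induced SCM $\C{M}_{\C{E}_{\C{D}_{\intervene(\B{X}_I = \B{\xi}_I)}}}$. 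Chaining the two solution-identities gives the claimed coincidence.

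The content of the argument is almost entirely bookkeeping, so I expect the only delicate point to be tracking exactly which structural-stability assumption feeds which structural-solvability requirement. In particular, the structural solvability of the intervened LEE needed by Lemma~\ref{lemm:LEE_to_SCM} does not follow from structural stability of $\C{D}$ alone---it genuinely requires the separate assumption that $\C{D}_{\intervene(\B{X}_I = \B{\xi}_I)}$ is structurally stable, which is why that hypothesis appears explicitly in the theorem. Making sure this dependency is stated correctly, rather than silently assuming it follows from stability of $\C{D}$, is the main thing to get right.
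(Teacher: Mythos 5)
Your proposal is correct and follows exactly the paper's route: the paper's own proof is the one-line ``Immediate from Theorem~\ref{theo:ODE_to_LEE} and Lemma~\ref{lemm:LEE_to_SCM},'' and your argument is precisely that composition, spelled out (pasting the two commuting squares for part (i), and chaining the two solution-identities for part (ii)). Your careful bookkeeping of which structural-stability hypothesis discharges which structural-solvability requirement of Lemma~\ref{lemm:LEE_to_SCM} is exactly the detail the paper leaves implicit, and you get it right.
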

\begin{proof}
Immediate from Theorem~\ref{theo:ODE_to_LEE} and Lemma~\ref{lemm:LEE_to_SCM}.
\end{proof}

\begin{figure*}
\begin{center}
  \begin{tikzpicture}
  \node [text width=30mm, text centered, draw] (ode) at (0,0) {ODE\\ $\C{D}$};
  \node [text width=30mm, text centered, draw] (lee) at (5,0) {LEE\\ $\C{E}_{\C{D}}$};
  \node [text width=30mm, text centered, draw] (scm) at (10,0) {SCM\\ $\C{M}_{\C{E}_{\C{D}}}$};
  \draw [|->,shorten >=2pt,shorten <=2pt] (ode) -- (lee) node [above,midway] {\ref{sec:ODE_to_LEE}};
  \draw [|->,dashed,shorten >=2pt,shorten <=2pt] (lee) -- (scm) node [above,midway] {\ref{sec:LEE_to_SCM}};
  \node [text width=30mm, text centered, draw] (intode) at (0,-2) {intervened ODE\\ $\C{D}_{\intervene(\B{X}_I = \B{\xi}_I)}$};
  \node [text width=30mm, text centered, draw] (intlee) at (5,-2) {intervened LEE\\ $\C{E}_{\C{D}_{\intervene(\B{X}_I = \B{\xi}_I)}}$};
  \node [text width=30mm, text centered, draw] (intscm) at (10,-2) {intervened SCM\\ $\C{M}_{\C{E}_{\C{D}_{\intervene(\B{X}_I = \B{\xi}_I)}}}$};
  \draw [|->,shorten >=2pt,shorten <=2pt] (intode) -- (intlee) node [above,midway] {\ref{sec:ODE_to_LEE}};
  \draw [|->,dashed,shorten >=2pt,shorten <=2pt] (intlee) -- (intscm) node [above,midway] {\ref{sec:LEE_to_SCM}};
  \draw [|->,shorten >=2pt,shorten <=2pt] (ode) -- (intode) node [right,midway] {\ref{sec:ODE_interventions}};
  \draw [|->,shorten >=2pt,shorten <=2pt] (lee) -- (intlee) node [right,midway] {\ref{sec:LEE_interventions}};
  \draw [|->,shorten >=2pt,shorten <=2pt] (scm) -- (intscm) node [right,midway] {\ref{sec:SCM_interventions}};;
\end{tikzpicture}
\end{center}
\caption{\label{fig:commutative_diagram}Each of the arrows in the
diagram corresponds with a mapping that is described in the section that the 
label refers to. The dashed arrows are only defined under structural
solvability assumptions on the LEE. If the ODE $\C{D}$ and intervened ODE $\C{D}_{\intervene(\B{X}_I = \B{\xi}_I)}$ are structurally stable, this diagram commutes (cf.\ Theorem~\ref{theo:ODE_to_LEE_to_SCM}).}
\end{figure*}
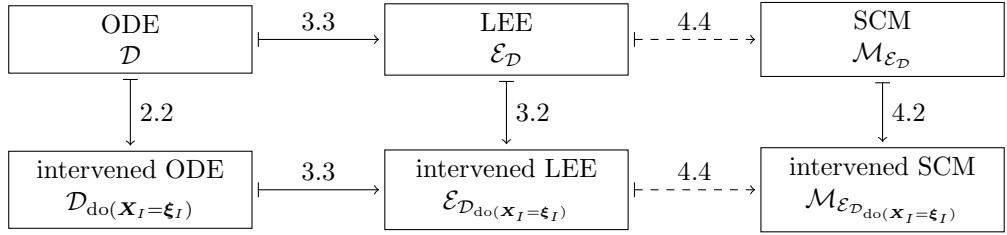

Note that even though the ODE may contain self-loops (i.e., the time
derivative $\dot X_i$ could depend on $X_i$ itself, and hence $i \in \Gpa{\C{D}}{i}$), 
the induced SCM $\C{M}_{\C{E}_{\C{D}}}$ does \emph{not} contain self-loops by construction 
(i.e., $i \not\in \Gpa{\C{M}_{\C{E}_{\C{D}}}}{i}$).
Somewhat surprisingly, the structural stability conditions actually imply the
existence of self-loops (because if $X_i$ would not occur in the equilibrium
equation $(\C{E}_{\C{D}})_i$, its value would be undetermined and hence the 
equilibrium would not be unique).

Whether one prefers the SCM representation over the LEE representation is
mainly a matter of practical considerations: both representations
contain all the necessary information to predict the results of arbitrary
perfect interventions, and one can easily go from the LEE representation to the
SCM representation. One can also easily go in the opposite direction, but this
cannot be done in a unique way. For example, one could rewrite each structural
equation $X_i = h_i(\B{X}_{\Gpa{\C{M}}{i}})$ as the equilibrium equation 
$0 = h_i(\B{X}_{\Gpa{\C{M}}{i}}) - X_i$, but also as the equilibrium equation 
$0 = h_i^3(\B{X}_{\Gpa{\C{M}}{i}}) - X_i^3$ (in both cases, it would be given the label $i$).

In case the dynamics contains no directed cycles (not considering
self-loops), the advantage of the SCM representation is that it is 
more explicit. Starting at the variables without parents, and following the topological
ordering of the corresponding directed acyclic graph, we directly obtain the solution of an SCM
by simple substitution in a finite number of steps. On the other hand, the LEE representation is more implicit,
and we need to solve a set of equations. 
In the cyclic case, one needs to solve a set of equations in both representations,
and the difference is merely cosmetical. However, one could argue that the LEE representation
is slightly more natural in the cyclic case, as it does not force us to make additional (structural) stability assumptions.

\subsection{Example: damped harmonic oscillators}

Figure~\ref{fig:MassSpring_collapsed_graph} shows the graph of the structural causal model induced
by our construction. It reflects the intuition that at equilibrium, (the position of) each mass has a direct causal 
influence on (the positions of) its neighbors.
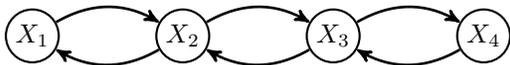
\begin{figure}[b]
  \centerline{\begin{tikzpicture}
    \node[var] (X1) at (0,1.5) {$X_1$};
    \node[var] (X2) at (2,1.5) {$X_2$};
    \node[var] (X3) at (4,1.5) {$X_3$};
    \node[var] (X4) at (6,1.5) {$X_4$};
    \draw[arr, bend left] (X2) edge (X1);
    \draw[arr, bend left] (X1) edge (X2);
    \draw[arr, bend left] (X3) edge (X2);
    \draw[arr, bend left] (X2) edge (X3);
    \draw[arr, bend left] (X4) edge (X3);
    \draw[arr, bend left] (X3) edge (X4);
  \end{tikzpicture}}
  \caption{\label{fig:MassSpring_collapsed_graph}Graph of the structural causal model induced by the mass-spring system for $D=4$.}
\end{figure}
Observing that the momentum variables always vanish at equilibrium (even for any perfect intervention that we consider),
we can decide that the only relevant variables for the SCM are the position variables $Q_i$. Then, we end up with the following 
structural equations:
\begin{equation}\label{eq:SCM_oscillators}
  Q_i = \frac{k_i (Q_{i+1} - l_i) + k_{i-1} (Q_{i-1} + l_{i-1})}{k_i + k_{i+1}}.
\end{equation}






\section{Discussion}

In many empirical sciences (physics, chemistry, biology, etc.) and in engineering, 
differential equations are a commonly used modeling tool. When estimating system 
characteristics from data, they are especially useful if
measurements can be done on the relevant time scale. If equilibration time
scales become too small with respect to the temporal resolution of measurements,
however, the more natural representation may be in terms of structural causal
models. The main contribution of this work is to build an explicit bridge from
the world of differential equations to the world of causal models.
Our hope is that this may aid in broadening the impact of causal modeling.

Note that information is lost when going from a dynamical system representation to an
equilibrium representation (either LEE or SCM), in particular the rate of convergence
toward equilibrium. If time-series data is
available, the most natural representation may be the dynamical system representation.
If only snapshot data or equilibrium data is available, the dynamical system representation
can be considered to be overly complicated, and one may use the LEE or SCM representation
instead.

We have shown one particular way in which structural causal models can be ``derived''.
We do not claim that this is the only way: on the contrary, SCMs can probably be 
obtained in several other ways and from other representations as well. One issue that
we have not yet addressed is that of \emph{constants of motion}. For example, if we
would not fix the end points of the chain of harmonic oscillators, then the total
momentum of the system would depend on the initial condition, and therefore the dynamics
would not be stable anymore according to the definition we have used here. We believe
that these and similar issues can probably be solved by being more explicit about which
variables in the dynamics will become part of the structural causal model. We plan to
address this in future work.


We also intend to extend the basic framework described here towards the more general 
stochastic case. Uncertainty or ``noise'' can enter in two different ways: via uncertainty
about certain (constant) parameters of the differential equations, and via latent variables.
A complicating factor that has to be addressed then (which does not play a role in the deterministic
case) is confounding. 




\subsubsection*{Acknowledgements} 

We thank Bram Thijssen, Tom Claassen, Tom Heskes and Tjeerd Dijkstra for stimulating discussions. JM was supported by NWO, the
Netherlands Organization for Scientific Research (VENI grant 639.031.036).
 
\bibliographystyle{apalike}
\bibliography{dyneq_to_scm,bibfile3}
\end{document}